\newtheorem{theorem}{Theorem}
\newtheorem{definition}{Definition}
\title{A QUBO Model for\\Gaussian Process Variance Reduction}
\author[]{Lorenzo Bottarelli, Alessandro Farinelli}
\date{}
\affil[]{Department of Computer Science, University of Verona, Italy}
\begin{document}
\newcommand{\TODO}[1]{{\sethlcolor{red} \hl{TODO #1}}}
\newcommand{\dquotes}[1]{``#1''}
\newcommand{\squotes}[1]{`#1'}

\maketitle
\begin{abstract}
Gaussian Processes are used in many applications to model spatial phenomena. Within this context, a key issue is to decide the set of locations where to take measurements so as to obtain a better approximation of the underlying function. Current state of the art techniques select such set to minimize the posterior variance of the Gaussian process. We explore the feasibility of solving this problem by proposing a novel Quadratic Unconstrained Binary Optimization (QUBO) model. In recent years this QUBO formulation has gained increasing attention since it represents the input for the specialized quantum annealer D-Wave machines. Hence, our contribution takes an important first step towards the sampling optimization of Gaussian processes in the context of quantum computation. Results of our empirical evaluation shows that the optimum of the QUBO objective function we derived represents a good solution for the above mentioned problem. In fact we are able to obtain comparable and in some cases better results than the widely used submodular technique.
\end{abstract}

\section{Introduction} \label{sec:Introduction}

Gaussian processes are a widely used tool in machine learning \cite{Rasmussen,Murphy:2012} and provides a statistical distribution together with a way to model an unknown function $f$.
A Gaussian process (GP) defines a prior distribution over functions, which can be converted into a posterior distribution over functions once we have observed some data.

In many spatial analysis, such as environmental monitoring applications, the unknown scalar field of a phenomenon of interest (e.g. the temperature of the environment or the pH value of water in rivers or in lakes \cite{Hitz2014, Singh:2006}) is modeled using a GP. 
In this context it is necessary to choose a set of locations in space in which to measure the specific phenomenon of interest, or similarly it is necessary to chose the displacement positions of fixed sensors \cite{guestrin2005near,krause2008near,Krause-sensor1}. 
However, in both cases the process is usually costly and one wants to select observations that are especially informative with respect to some objective function. A good choice of sampling locations allows to obtain a better approximation of the underling phenomenon.

Research in the context aims at selecting the set of measurements so as to optimize an important sensing quality function for spatial prediction that is represented by the reduction of predictive variance of the Gaussian Process \cite{Krause-sensor2}.
Das and Kempe \cite{das2008algorithms} showed that, in many cases, the variance reduction at any particular location is submodular.
Submodularity is a property of a specific class of set functions. They encode an intuitive diminishing returns property that allows for a  greedy forward-selection algorithm that is widely exploited in sensing optimization literature \cite{Krause-sensor1,Krause07,Powers,krause2014submodular,tzoumas2018resilient}. 


A more recent work aims at simultaneously optimize sensing locations by minimizing the posterior variance of a GP through the use of a gradient descent algorithm \cite{BottarelliGD}. However, this technique makes the assumption that the space where observation can be made is continuous and requires an initialization of sampling points to be optimized.

In general, the selection of the optimal set of measurement locations in order to minimize the posterior variance of a GP is NP-hard given the combinatorial nature of the problem and the exponential number of candidate solutions.
This and may problems in artificial intelligence and pattern recognition are computationally difficult due to their inherent complexity and the exponential size of the solution space. 
Quantum information processing could provide a viable alternative to combat such a complexity. 
A notable progress in this direction is represented by the recent development of the D-Wave quantum annealer, whose processor has been designed to the purpose of solving Quadratic Unconstrained Binary Optimization (QUBO) problems.
As a consequence, many works in literature investigate the possibility of using quantum annealing to address hard artificial intelligence and pattern recognition problems by proposing their QUBO formulation.

Examples include image recognition \cite{neven1}, bayesian network structure learning \cite{gorman1}, fault detection and diagnosis \cite{perdomo1}, training a binary classifier \cite{neven2} and portfolio optimization \cite{rosenberg2016solving,marzec2016portfolio,venturelli2018reverse}.
Moreover, in the context of mathematical logic, Bian et al. \cite{bian2018solving} propose a QUBO formulation to tackle the maxSAT problem, an optimization extension of the well known SAT (boolean satisfiability) problem \cite{cook1971complexity}. 

NASA’s Quantum Artificial Intelligence Laboratory (QuAIL) team\footnote{https://ti.arc.nasa.gov/tech/dash/groups/physics/quail/} hosts one of the D-Wave machine and aims to investigate whether quantum computing can improve the ability to address difficult optimization and machine learning problems related to several fields that include NASA's aeronautics, Earth and space sciences, and space exploration missions. 
The focus of the QuAIL team is both theoretical and empirical investigations of quantum annealing. Biswas et al. \cite{BISWAS201781} reviews NASA perspective on quantum computing of three potential application areas such as planning and scheduling \cite{Smelyanskiy1,Rieffel1,venturelli2015quantum,tran2016explorations,tran2016hybrid}, fault detection and diagnosis \cite{perdomo1}, and sampling/machine learning \cite{benedetti2016estimation,amin2015searching,adachi2015application,benedetti2016quantum}.
These works are part of the emerging field of quantum machine learning \cite{schuld2015introduction} where the use of quantum computing technologies for sampling and machine learning applications has attracted increasing attention in recent years.

In this paper we investigate this possibility by proposing a QUBO model to optimize a set of sensing locations. More in details the contributions of this paper are:
\begin{itemize}
\item We propose a QUBO model to minimize the posterior variance of a Gaussian process.
\item We provide a mathematical demonstration that the optimum of our QUBO model satisfies the constraint of the problem.
\item We study the performance of the proposed QUBO model with respect to the submodular greedy algorithm and a random selection.
\end{itemize}

\section{Background} \label{sec:Background}

\subsection{Gaussian Processes} \label{subsec:Gaussian Processes}

A Gaussian Process is a flexible and non-parametric tool that defines a prior distribution over functions, which can be converted into a posterior distribution over functions once we have observed some data. 
A GP is completely defined by its mean and kernel function (also called covariance function) which encodes the smoothness properties of the modeled function $f$. 

Suppose we observe a training set  $\mathcal{K}=\{ (\bm{\mu}_i,y_i) | i=1, \dots, K \}$, that is, a set of $K$ measurements $\{y_1,y_2,\cdots,y_K\}$ taken at locations $\{\bm{\mu}_1,\bm{\mu}_2,\cdots,\bm{\mu}_K\}$.
We consider Gaussian processes that are estimated based on a set of noisy measurements. Hence, we assume that $y_i=f(\textbf{x}_i)+ \epsilon$ where $\epsilon \sim \mathcal{N}(0,\sigma_n^2)$, that is, observations with additive independent identically distributed Gaussian noise $\epsilon$ with variance $\sigma_n^2$. The posterior mean and variance over $f$ for a test point $\textbf{x}_*$ can be computed as follows \cite{Rasmussen,Murphy:2012}:

\begin{equation} \label{eq:mu}
\overline{f}(\textbf{x}_*) = \textbf{k}_*^T (\textbf{K} + \sigma_n^2\textbf{I})^{-1}\textbf{y}
\end{equation}
\begin{equation} \label{eq:sigma^2}
\sigma^2(\textbf{x}_*) = k(\textbf{x}_*,\textbf{x}_*) - \textbf{k}_*^T(\textbf{K} + \sigma_n^2 \textbf{I})^{-1}\textbf{k}_*
\end{equation}
where $\textbf{k}_* = [k(\bm{\mu}_1,\textbf{x}_*), \cdots, k(\bm{\mu}_K,\textbf{x}_*)]^T$ and $\textbf{K} = [k(\bm{\mu}_i,\bm{\mu}_j)]_{\bm{\mu}_i,\bm{\mu}_j \in \mathcal{K}}$.
Using the above equations we can compute the GP to update our knowledge about the unknown function $f$ based on information acquired through observations. 

Note that the variance computed using Equation \ref{eq:sigma^2} does not depend of the observed values $y_i$ but only on the locations $\bm{\mu}_i$ of the training set. This is an important property of GPs and plays a significant role in our contribution.

The predictive performance of GPs depends exclusively on the suitability of the chosen kernel and parameters.
There are lots of possible kernel functions to choose from \cite{Rasmussen,Murphy:2012}. A common property is to have the covariance decrease as the distance between the points grows, so that the prediction is mostly based on the near locations. 
A famous and widely use kernel is the squared exponential, also known as Gaussian kernel:

\begin{equation} \label{eq: SEK}
k(\bm{a},\bm{b}) = \sigma_f^2 \exp\bigg( - \frac{(\bm{a}-\bm{b})^T(\bm{a}-\bm{b})}{2l^2} \bigg)
\end{equation}

The kernel function will often have some parameters, for example, a length parameter that determines how quickly the covariance decreases with distance. In the squared exponential kernel $l$ controls the horizontal length scale over which the function varies, and $\sigma_f^2$  controls the vertical variation.

\subsection{Submodular functions} \label{subsec:Submodular functions}

A set function is a function which takes as input a set of elements. Particular classes of set functions turn out to be submodular. A fairly intuitive characterization of a submodular function has been given \cite{Nemhauser1978}: 
\begin{definition}
A function $F$ is submodular if and only if for all $A \subseteq B \subseteq X$ and $x \in X \setminus B$ it holds that: 
\begin{equation} \label{eq: submodular}
F(A \cup \{x\})-F(A) \geq F(B \cup \{x\})-F(B)
\end{equation}
\end{definition}

This definition captures a concept known as diminishing return property.
Informally we can say that if $F$ is submodular, adding a new element $x$ to a set increases the value of $F$ more if we have fewer elements than if we have more.
This property allows for a simple greedy forward-selection algorithm with an optimality bound guarantees \cite{Nemhauser1978}. This is widely exploited in sensing optimization literature \cite{Krause-sensor1,Krause07,Powers,krause2014submodular,tzoumas2018resilient}. 

This concept is of our interest as it directly apply to Gaussian processes.
Specifically, the posterior variance of a Gaussian process belongs to this class of submodular functions. \cite{das2008algorithms} show that the variance reduction:
\begin{equation}
F_{\textbf{x}}(A)= \sigma^2(\textbf{x})-\sigma^2(\textbf{x}|A)
\end{equation}
at any particular point $\textbf{x}$, satisfies the diminishing returns property: adding a new observation reduces the variance in $\textbf{x}$ more if we have made few observations so far, and less if we have already made many observations.

\subsection{Quadratic Unconstrained Binary Optimization (QUBO)} \label{subsec: Quadratic Unconstrained Binary Optimization (QUBO)}

The goal of a Quadratic Unconstrained Binary Optimization problem is to find the assignment of a set of binary variables $z_1 ... z_n$ so as to minimize a given objective function: 
\begin{equation} \label{eq:QUBOObjectiveFunction}
O(z_1, ..., z_n)= \sum \limits_{i=1}^n a_i z_i + \sum \limits_{1 \leq i < j \leq n} b_{i,j} z_i z_j
\end{equation}

Each instance of a QUBO problem can be conveniently represented by using a weighted graph where each node $i$ represents a binary variable $z_i$, a linear coefficient $a_i$ encodes the value associated to the node $i$ and a quadratic coefficient $b_{i,j}$ encodes the value associated to the edge between nodes $i$ and $j$.

In this graphical representation the QUBO objective function \eqref{eq:QUBOObjectiveFunction} corresponds to the summation of the values in the graph, namely the sum of linear terms will be the sum of the node values and the sum of the quadratic terms will be the sum of the edge values: 
Hence, the minimization of this objective function is equivalent to decide which nodes to remove in such a way that the summation of values remaining in the graph is the lowest possible. Notice that the removal of a node implies the removal of all edges that are incident to that node.


\subsection{Problem definition} \label{subsec:Problem definition}

Given a Gaussian process and a discretized domain $\mathcal{X}$, we want to select a set of $K$ points within $\mathcal{X}$ where to perform measurements in order to minimize the total posterior variance of the Gaussian Process. Specifically we want to select a set of $K$ measurements taken at locations $\mathcal{K}=\{\textbf{x}_1,\textbf{x}_2,\cdots,\textbf{x}_K\}$ such that we minimize the following objective function: 
\begin{equation} \label{eq: OBJ(chapQUBOGD)}
J(\{\textbf{x}_1,\textbf{x}_2,\cdots,\textbf{x}_K\}) = \sum_{\textbf{x}_i \in \mathcal{X}} \sigma^2(\textbf{x}_i)
\end{equation}
where $\sigma^2(x)$ is the Gaussian process variance computed as defined by Equation \ref{eq:sigma^2}.  
Notice that the variance computed with Equation \ref{eq:sigma^2} in locations $\textbf{x}_i \in \mathcal{X}$ is dependent on the set of sampling points $\mathcal{K}$ as explained in Section \ref{subsec:Gaussian Processes}.

Our goal is to model a QUBO objective function (Equation \ref{eq:QUBOObjectiveFunction}) that approximate our problem:
\begin{equation} \label{eq:approximation}
O(z_1, \dots , z_n) \approx J(\{\textbf{x}_1,\textbf{x}_2,\cdots,\textbf{x}_K\})
\end{equation}

\section{QUBO model for GP variance reduction} \label{sec:QUBO model for GP variance reduction}

Given a domain $\mathcal{X}$ the QUBO model will be a complete graph composed of $|\mathcal{X}|$ nodes. Specifically, it's a graph where every node is connected with every other node. Hence, the total number of edges is $\frac{|\mathcal{X}|(|\mathcal{X}|-1)}{2}$.
In the following description we explain how the values of the graph are set. We decompose the study of the model into two parts:
\begin{enumerate}
\item How we set the values related to the variance of the Gaussian Process.
\item How we set the values in order to implement the constraint of the problem (i.e, the number of sampling points that must be exactly $K$).
\end{enumerate}

\subsection{Variance values} \label{subsec:Variance values}

As previously mentioned given a domain $\mathcal{X}$, that is the set of candidate sampling locations, we build a graph composed by $|\mathcal{X}|$ nodes where every node corresponds to a location of the domain.
Since our problem requires to select a set of sampling locations such that the variance of the Gaussian process is minimized, a natural representation is to assign to each node of the graph the amount of variance reduction that is obtainable by sampling the GP in the location represented by that node. That is, the value of node $i$ is: 

\begin{equation} \label{eq:alpha_i}
\alpha_i \triangleq J(\{\textbf{x}_i\})-J(\emptyset )
\end{equation}

Given the properties of Gaussian processes this definition of $\alpha_i$ guarantees negative values.
Since the objective function of a QUBO instance has to be minimized and the objective of our problem is to minimize the variance of a Gaussian Process, Equation \ref{eq:alpha_i} sets $\alpha_i$ values as the negative amount of variance reduction. Selecting a node gives us an improvement (i.e, a lower value of the QUBO objective function) equivalent to the amount of variance that we can reduce from the GP by selecting the sampling location represented by this node of the graph. 
We can imagine these $\alpha_i$ as \dquotes{selecting forces}, in fact the amount of variance reduction obtainable is a force that requires a specific node to be selected as a candidate sampling location.

When we select two sampling locations the amount of variance reduction of the Gaussian process is lower than the sum of the reduction obtainable from the two sampling locations alone, specifically:
\begin{equation} \label{eq:varianceproperty}
J(\emptyset)-J(\{\textbf{x}_i,\textbf{x}_j\}) < \big(J(\emptyset) - J(\{\textbf{x}_i\})\big) + \big(J(\emptyset)-J(\{\textbf{x}_j\})\big)
\end{equation}

As a consequence, in our QUBO objective function we have to keep into account the \squotes{mutual information} between sampling locations. To this aim we set the edges values in our graph as follows:

\begin{align} \label{eq:beta_ij}
\beta_{i,j} &\triangleq J(\{\textbf{x}_i, \textbf{x}_j\})-J(\emptyset ) - \alpha_i - \alpha_j \nonumber \\
		&= J(\{\textbf{x}_i, \textbf{x}_j\})-J(\emptyset ) -J(\{\textbf{x}_i\})+J(\emptyset ) -J(\{\textbf{x}_j\})+J(\emptyset ) \nonumber \\
        &= J(\{\textbf{x}_i, \textbf{x}_j\}) -J(\{\textbf{x}_i\}) -J(\{\textbf{x}_j\}) +J(\emptyset ) 
\end{align}

We can imagine these $\beta_{i,j}$ almost as the opposite as a \dquotes{selecting force}. The variance reduction obtainable is proportional to the distance between points in space and location close to each other would have a high mutual information. As a consequence the value $\beta_{i,j}$ between these locations would be high discouraging the simultaneous selection of these two point in space. Please note that what makes two points near or far depends on the length-scale of the process encoded in the hyperparameters of the kernel used.

The values $\beta_{i,j}$ are guaranteed to be always positive as we prove in the following theorem.

\begin{theorem} \label{thm:valuesbij}
Values $\beta_{i,j}$ are guaranteed to be higher that 0.
\end{theorem}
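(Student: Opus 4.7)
The plan is to express $\beta_{i,j}$ in a form that makes its positivity manifest from the strict subadditivity of total variance reduction that is already recorded (without proof) in Equation~\ref{eq:varianceproperty}. First I would take the last line of Equation~\ref{eq:beta_ij} and regroup the four terms as
\[
\beta_{i,j} = \bigl(J(\emptyset) - J(\{\textbf{x}_i\})\bigr) + \bigl(J(\emptyset) - J(\{\textbf{x}_j\})\bigr) - \bigl(J(\emptyset) - J(\{\textbf{x}_i, \textbf{x}_j\})\bigr),
\]
so that each parenthesized quantity is a nonnegative amount of variance reduction. Equation~\ref{eq:varianceproperty} states precisely that the sum of the first two dominates the third, which immediately yields $\beta_{i,j} > 0$.

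If Equation~\ref{eq:varianceproperty} itself is not to be assumed, I would derive it directly from the pointwise Das--Kempe submodularity recalled in Section~\ref{subsec:Submodular functions}. Since $J(A) = \sum_{\textbf{x} \in \mathcal{X}} \sigma^2(\textbf{x}|A)$, it follows that $J(\emptyset) - J(A) = \sum_{\textbf{x} \in \mathcal{X}} F_{\textbf{x}}(A)$, a sum of the single-point variance reduction functions. Each $F_{\textbf{x}}$ is submodular with $F_{\textbf{x}}(\emptyset) = 0$, so instantiating the submodular inequality with $A = \emptyset$, $B = \{\textbf{x}_i\}$ and added element $\textbf{x}_j$ and then summing over every $\textbf{x} \in \mathcal{X}$ gives
\[
\bigl(J(\emptyset) - J(\{\textbf{x}_j\})\bigr) \geq \bigl(J(\emptyset) - J(\{\textbf{x}_i, \textbf{x}_j\})\bigr) - \bigl(J(\emptyset) - J(\{\textbf{x}_i\})\bigr),
\]
which is exactly the weak bound $\beta_{i,j} \geq 0$ after rearrangement.

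The main obstacle is upgrading $\geq$ to the strict $>$ claimed by the theorem, because pointwise submodularity only guarantees a weak inequality. To close this gap I would argue that whenever the kernel is positive definite and $\textbf{x}_i \neq \textbf{x}_j$, there exists at least one test point $\textbf{x}_* \in \mathcal{X}$ at which the inequality is strict. Concretely, I would expand $\sigma^2(\textbf{x}_*|\{\textbf{x}_i, \textbf{x}_j\})$ from Equation~\ref{eq:sigma^2} via the Schur complement of the $2 \times 2$ matrix $\textbf{K} + \sigma_n^2 \textbf{I}$, and show that the contribution to $\beta_{i,j}$ at $\textbf{x}_*$ contains a cross term proportional to $k(\textbf{x}_i, \textbf{x}_j)^2$ that is strictly positive. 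Combined with the weak bound obtained in the previous paragraph for all remaining test points, this yields the strict conclusion $\beta_{i,j} > 0$.
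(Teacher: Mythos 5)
Your first paragraph is exactly the paper's proof: the authors take Equation \ref{eq:varianceproperty} as given and obtain $\beta_{i,j}>0$ by precisely this rearrangement of the four terms, and they stop there. So the core of your argument matches the paper, with the same reliance on an inequality that the paper asserts but never derives.

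The trouble is in your second and third paragraphs, where you try to actually prove Equation \ref{eq:varianceproperty}. Pointwise submodularity of the variance reduction $F_{\textbf{x}}$ is not unconditional --- Das and Kempe establish it only under a suppressor-freeness assumption --- and it genuinely fails here, which breaks your ``weak bound at every test point plus strictness at one'' strategy. Writing $a=k(\textbf{x}_i,\textbf{x}_i)+\sigma_n^2$, $b=k(\textbf{x}_j,\textbf{x}_j)+\sigma_n^2$, $c=k(\textbf{x}_i,\textbf{x}_j)$, $u=k(\textbf{x}_i,\textbf{x}_*)$, $v=k(\textbf{x}_j,\textbf{x}_*)$, the Schur-complement expansion you propose gives a pointwise contribution to $\beta_{i,j}$ of
\[
\frac{2abcuv-c^{2}\left(bu^{2}+av^{2}\right)}{ab\left(ab-c^{2}\right)},
\]
whose numerator has no definite sign: for a test point with $v=0$ but $u\neq 0$ and $c\neq 0$ it equals $-c^{2}bu^{2}<0$, because an observation at $\textbf{x}_j$, though uncorrelated with $\textbf{x}_*$, helps denoise the observation at $\textbf{x}_i$ and makes the joint reduction at $\textbf{x}_*$ exceed the sum of the individual reductions. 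So the cross term you identify is not ``strictly positive'' in general, and the positivity of $\beta_{i,j}$ is a statement about the sum over the whole domain that does not follow from a term-by-term bound. Your derivation therefore does not close the gap; it only relocates it, and the honest conclusion is that both the paper's proof and yours ultimately rest on Equation \ref{eq:varianceproperty} as an unproven assumption.
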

\begin{proof}
Given Equation \ref{eq:varianceproperty}:
\begin{align} 
&J(\emptyset)-J(\{\textbf{x}_i,\textbf{x}_j\}) < \big(J(\emptyset) - J(\{\textbf{x}_i\})\big) + \big(J(\emptyset)-J(\{\textbf{x}_j\})\big) \nonumber \\
&J(\emptyset)-J(\{\textbf{x}_i,\textbf{x}_j\}) < 2J(\emptyset) - J(\{\textbf{x}_i\}) - J(\{\textbf{x}_j\}) \nonumber \\
&-J(\emptyset)-J(\{\textbf{x}_i,\textbf{x}_j\}) + J(\{\textbf{x}_i\}) + J(\{\textbf{x}_j\}) < 0 \nonumber \\
&J(\emptyset)+J(\{\textbf{x}_i,\textbf{x}_j\}) - J(\{\textbf{x}_i\}) - J(\{\textbf{x}_j\}) > 0 \nonumber
\end{align}
\end{proof}

Given the current model, the objective function is minimized when we select a number of nodes (sampling points) that is dependent on the values on the graph which, in turn, depend on the domain and kernel of the Gaussian process.
Since the problem asks to select a specific given number $K$ of sampling points, it is now important to implement a constraint that guarantees that the correct number is selected. This is described in the next section.

\subsection{Implementation of the constraint} \label{subsec:Implementation of the constraint}

In order to implement a constraint into an unconstrained problem we have to encode it as penalties in our objective function. In case of a QUBO model that has to be minimized, any combination that does not satisfy the constraint must have an higher value that prevents the selection of such configuration as an optimal solution.

In what follows we describe how we can implement the constraint of our problem into a complete graph. To do so, we further divide the description of our implementation in two separate parts, specifically: 
\begin{enumerate}
\item How to guarantee that exactly $K$ nodes are selected in an zero-value graph (i.e. a weighted graph with zero values on every node and every edge).
\item How to guarantee that the constraint is strong enough given that we want to implement it in a non zero-value graph.
\end{enumerate}

\subsubsection{Selecting $K$ nodes}

Here, we describe how we can set the values in a complete graph such that only $K$ nodes are selected.
In order to do so we have to guarantee that the objective function is minimized if and only if exactly $K$ nodes are selected. For any other cases the objective function needs to have a higher value.

Starting from a zero-value graph, to implement the constraint we need to add some values on the nodes and edges. Specifically, we will assign the same value $A \in \mathbb{R}$ to every node and the same value $B \in \mathbb{R}$ to every edge.
If in a complete graph we select $n$ nodes, this lead to the following sum of values:
\begin{equation} \label{eq:general sum}
\frac{Bn(n-1)}{2} + An
\end{equation}

If we want to guarantee that exactly $K$ nodes are selected Equation \ref{eq:general sum} needs to have its minimum value when $n=K$.

\begin{theorem} \label{thm:constraint values}
Given the function $\frac{Bn(n-1)}{2} + An$, for any $B>0$ if $A = -BK+\frac{B}{2}$ the minimum is in $n=K$.
\end{theorem}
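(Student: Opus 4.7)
The plan is to view $g(n) \triangleq \frac{Bn(n-1)}{2} + An$ as a real-valued quadratic function in $n$ and analyse its vertex, then verify that the given choice of $A$ places the vertex exactly at $n = K$. Since $n$ represents a number of selected nodes, an integer variable, the fact that the vertex lies at the integer $K$ means the discrete minimum coincides with the continuous one.

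First I would expand to obtain
\begin{equation*}
g(n) = \frac{B}{2}n^2 + \left(A - \frac{B}{2}\right)n,
\end{equation*}
which is a parabola in $n$ with leading coefficient $B/2$. Because $B > 0$ by hypothesis, the parabola opens upward, so it has a unique global minimum located at its vertex.

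Next I would compute the vertex by solving $g'(n) = 0$, giving
\begin{equation*}
Bn + A - \frac{B}{2} = 0 \quad\Longrightarrow\quad n^* = \frac{1}{2} - \frac{A}{B}.
\end{equation*}
Substituting $A = -BK + \frac{B}{2}$ yields $n^* = \frac{1}{2} - \left(-K + \frac{1}{2}\right) = K$, so the continuous minimum is at $n = K$.

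Finally, since $g$ is strictly convex (again using $B > 0$) and attains its unique real minimum at the integer $K$, it follows that $g(K) < g(n)$ for every integer $n \neq K$, which is exactly the required statement in the discrete setting of our graph. Honestly there is no hard step here; the only thing to be careful about is making the distinction between the continuous and discrete optimization explicit, to ensure that the minimum over admissible integer values of $n$ truly coincides with the vertex of the parabola rather than with a neighbouring integer, and this works out cleanly precisely because the chosen $A$ shifts the vertex onto the integer $K$.
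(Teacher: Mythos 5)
Your proof is correct and follows essentially the same route as the paper: expand the quadratic, use $B>0$ for upward convexity, and locate the vertex via the derivative (the paper solves for $A$ given $n^*=K$, you solve for $n^*$ given $A$, which is the same computation read in reverse). Your added remark that the continuous vertex landing exactly on the integer $K$ justifies the discrete minimum is a small point the paper leaves implicit, but it does not change the argument.
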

\begin{proof}
\begin{equation}
\dfrac{Bn(n-1)}{2} + An = \dfrac{Bn^2}{2} - \dfrac{Bn}{2} + An \nonumber
\end{equation}
With $B>0$ the quadratic function is a parabola that opens upwards and the minimum of the function is where the derivative is equal to $0$.

\begin{align}
\dfrac{\partial}{\partial n} \Big( \dfrac{Bn^2}{2} - \dfrac{Bn}{2} + An \Big) = 0 \nonumber\\
Bn - \dfrac{B}{2} + A = 0 \nonumber
\end{align}

Now we want to fix the derivative to be $0$ when $n=K$.

\begin{align}
BK - \dfrac{B}{2} + A = 0 \nonumber \\
A = -BK + \dfrac{B}{2} \nonumber
\end{align}

\end{proof}

With theorem \ref{thm:constraint values} we have shown that it is possible to implement the constraint for any $K$, hence also in the restricted case of our problem when $K \in \mathbb{N}^{[2,|X|-1]}$.

\subsubsection{Guarantee the strength of the constraint}

In the discussion above we have shown how to implement the constraint in a zero-value graph. However, in our case we have to guarantee that it is satisfied in a complete graph that is already populated with values. In this section we show that values $A$ and $B$ can be set such that the energy penalties are strong enough to guarantee that the constraint is always satisfied.

Given $A = -BK+\frac{B}{2}$, Equation \ref{eq:general sum} becomes:

\begin{equation}
\dfrac{Bn(n-1)}{2} - BKn +\dfrac{Bn}{2} = \dfrac{Bn^2}{2} -BKn
\end{equation}

We want to analyze how this function increases as we move away from $n=K$ that represent the minimum. Specifically, given $l \in \mathbb{N}$ we analyze how this function increases when $n=K+l$ and when $n=K-l$.

\begin{align} \label{eq:constraint strength piu}
\Big[ \dfrac{Bn^2}{2} -BKn \Big]_{n=K+l} - \Big[ \dfrac{Bn^2}{2} -BKn \Big]_{n=K}   &=  \nonumber\\
\dfrac{B(K+l)^2}{2}-BK(K+l)- \Big( \dfrac{BK^2}{2} - BK^2 \Big) &= \dfrac{Bl^2}{2} 
\end{align}

\begin{align} \label{eq:constraint strength meno}
\Big[ \dfrac{Bn^2}{2} -BKn \Big]_{n=K-l} - \Big[ \dfrac{Bn^2}{2} -BKn \Big]_{n=K}   &=  \nonumber\\
\dfrac{B(K-l)^2}{2}-BK(K-l)- \Big( \dfrac{BK^2}{2} - BK^2 \Big) &= \dfrac{Bl^2}{2} 
\end{align}

With Equations \ref{eq:constraint strength piu} and \ref{eq:constraint strength meno} we show that the constraint act as an energy penalty by increasing the value of the objective function quadratically with the difference of number of nodes selected with respect to a given $K$.

Note that the constraint that we have built is generic and can be applied to any other problems where a specific given number of nodes has to be selected. From the mathematical point of view of a QUBO function the constraint built so far guarantees that exactly $K$ binary variables needs to have value $1$ in order to minimize the function.
Moreover, any feasible combination (combinations where $K$ binary variables has value $1$) starts with the same energy value as expressed in Equation \ref{eq:baseEnergy}, leaving these solutions to `compete' for which  is the optimal one for a specific instance.

\begin{equation} \label{eq:baseEnergy}
\dfrac{BK(K-1)}{2} -BK^2 + \dfrac{BK}{2} = -\dfrac{BK^2}{2}
\end{equation}

The strength of the constraint is directly dependent on the value $B$. In order to guarantee that it is satisfied in a non zero-value graph (i.e. a weighted graph with non-zero weights) we have to selected the value of $B$ big enough to overcome additional `forces'. 

From a practical point of view, since $B \in \mathbb{R}^+$ we can just set it as a very big number to guarantee that the constraint is satisfied. For our specific problem we show in the next section that it is possible to compute a bound for the value $B$.

\subsection{Ensuring a lower bound for the constraint}

Here we want to compute a lower bound for the value of $B$ such that the constraint is satisfied in a generic weighted graph with values computed as presented in Section \ref{subsec:Variance values}. Let start by making the following considerations: 
\begin{itemize}
\item The constraint is satisfied if and only if the minimum of the QUBO objective function correspond to a solution with exactly $K$ nodes selected.
\item The strength of the constraint (i.e. the energy penalty) as computed in Equations \ref{eq:constraint strength piu} and \ref{eq:constraint strength meno} is $\dfrac{Bl^2}{2}$, where $l\in \mathbb{N}$ represents the difference from $K$ on the number of selected points.
\item Values $\alpha_i$ computed with Equation \ref{eq:alpha_i} are always negative.
\item Values $\beta_{i,j}$ computed with Equation \ref{eq:beta_ij} are always positive.
\end{itemize}

Given these considerations, we can compute what is the strongest \dquotes{force} that a non feasible solution is applying to deviate from a feasible solution. In other terms, what is the highest contribution to the QUBO function that a configuration without the constraint satisfied is applying. Once we compute this value, we can set $B$ to be high enough such that the energy penalty overcome the worst case scenario. In what follow we analyze the two possible cases.

\paragraph{Configuration with more nodes selected}~\\
Let $\mathcal{A}$ be the set of the $\alpha_i$ values computed with Equation \ref{eq:alpha_i}, that is, the set of values assigned to the nodes and that represent the amount of variance reduction obtainable by sampling in that point.

Now, we define a set $\mathcal{A}_n$ as the set of the $n$ lowest values of the set $\mathcal{A}$:

\begin{equation}
\mathcal{A}_n \triangleq \begin{cases}
\mathcal{A}_0 = \emptyset \\
\mathcal{A}_n = \mathcal{A}_{n-1} \cup \min(\mathcal{A} \setminus \mathcal{A}_{n-1})
\end{cases}
\end{equation}

Since we want to minimize the QUBO objective function the contribution of values $\alpha_i$ correspond to a \dquotes{force} that is trying to add more nodes to the solution, whereas values $\beta_{i,j}$ on the other side opposes to the selection of more nodes. For a configuration with $K+l$ nodes selected, the strongest contribution of forces that is trying to deviate from a feasible solution of $K$ nodes is the following: 
\begin{equation}
\underbrace{\sum_{\alpha_i \in \mathcal{A}_l} |\alpha_i|}_\text{Contribution of additional nodes} - \underbrace{\bigg( \dfrac{(K+l)(K+l-1)}{2} - \dfrac{K(K-1)}{2} \bigg)\beta_{i,j}}_\text{Contribution of additional edges}
\end{equation}

We can now compute an upper bound for this quantity as follows: 
\begin{align} \label{eq: forza piu nodi}
\sum_{\alpha_i \in \mathcal{A}_l} |\alpha_i| - \bigg( \dfrac{(K+l)(K+l-1)}{2} - \dfrac{K(K-1)}{2} \bigg)\beta_{i,j} &< \nonumber \\
\sum_{\alpha_i \in \mathcal{A}_l} |\alpha_i| - 0 &\leq \nonumber \\
l |\min(\mathcal{A})| &\leq \nonumber \\
l^2 |\min(\mathcal{A})| &
\end{align}

Now, if we want the constraint to be strong enough to overcome any configuration with more than $K$ nodes selected we need to impose the strength of the constraint computed in Equations \ref{eq:constraint strength piu} and \ref{eq:constraint strength meno} to be higher then the upper bounded force applied by non-feasible configuration as computed in Equation \ref{eq: forza piu nodi}, that is:

\begin{align} \label{eq: B per piu nodi}
\dfrac{Bl^2}{2} &> l^2 |\min(\mathcal{A})| \nonumber \\
B & > 2 |\min(\mathcal{A})|
\end{align}

\paragraph{Configuration with less nodes selected}~\\
Similarly to what we have done above, here we compute a bound for the value $B$ such that the energy penalty of the constraint is strong enough to overcome configurations with less than $K$ nodes selected. 

Let $\mathcal{B}$ be the set of the $\beta_{i,j}$ computed with Equation \ref{eq:beta_ij}, that is, the set of the value that we have assigned for the moment to the edges.
We define a set $\mathcal{B}_n$ as the set of the $n$ highest values of the set $\mathcal{B}$:
\begin{eqnarray}
\mathcal{B}_n \triangleq \begin{cases}
\mathcal{B}_0 = \emptyset \\
\mathcal{B}_n = \mathcal{B}_{n-1} \cup \max (\mathcal{B} \setminus \mathcal{B}_{n-1})
\end{cases}
\end{eqnarray}

Since we want to minimize the QUBO objective function the contribution of values $\beta_{i,j}$ correspond to a \dquotes{force} that is trying to remove nodes to the solution, whereas values $\alpha_i$ on the other side opposes to the removal of nodes. For a configuration with $K-l$ nodes selected, the strongest contribution of forces that is trying to deviate from the feasible solution of $K$ nodes is the following:

\begin{equation}
\underbrace{\sum_{ \beta_{i,j} \in \mathcal{B}_{\big(\frac{K(K-1)}{2} - \frac{(K-l)(K-l-1)}{2}\big)}} \beta_{i,j}}_\text{Contribution of removing edges} \;- \underbrace{l|\alpha_i|}_\text{contribution of removing nodes}
\end{equation}

where $\frac{K(K-1)}{2} - \frac{(K-l)(K-l-1)}{2}$ represents the number of extra edges if we select $K$ sensors as opposed to $K-l$.
We can now compute an upper bound for this quantity as follow:

\begin{align} \label{eq: forza meno nodi}
\sum_{ \beta_{i,j} \in \mathcal{B}_{(\frac{K(K-1)}{2} - \frac{(K-l)(K-l-1)}{2})}} \beta_{i,j} - l|\alpha_i| &< \nonumber \\
\sum_{ \beta_{i,j} \in \mathcal{B}_{(\frac{K(K-1)}{2} - \frac{(K-l)(K-l-1)}{2})}} \beta_{i,j} - 0 \leq \nonumber \\
\bigg(\dfrac{K(K-1)}{2} - \dfrac{(K-l)(K-l-1)}{2}\bigg) \max (\mathcal{B}) &= \nonumber \\
\bigg( Kl - \dfrac{l^2}{2} - \dfrac{l}{2} \bigg) \max (\mathcal{B}) &= \nonumber \\
\dfrac{l^2}{2}\bigg( \dfrac{2K}{l} -1 -\dfrac{1}{l} \bigg) \max (\mathcal{B}) &< \nonumber \\
\dfrac{2Kl^2}{2} \max (\mathcal{B})
\end{align}

Now, if we want the constraint to be strong enough to overcome any configuration with less than $K$ nodes selected we need to impose the strength of the constraint computed in Equations \ref{eq:constraint strength piu} and \ref{eq:constraint strength meno} to be higher then the upper bounded force applied by non-feasible configuration as computed in Equation \ref{eq: B per meno nodi}, that is:

\begin{align} \label{eq: B per meno nodi}
\dfrac{Bl^2}{2} &> \dfrac{2Kl^2}{2} \max (\mathcal{B}) \nonumber \\
B &> 2K \max (\mathcal{B})
\end{align}

Now, given the nature of the problem and considering that we want to present a general model that works with any possible hyperparameters combinations of the Gaussian process, we cannot infer which one between Equations \ref{eq: B per piu nodi} and \ref{eq: B per meno nodi} imposes a bigger value of $B$. However, for any instance of the QUBO objective function we can easily compute a feasible value for $B$ as follows: 

\begin{equation} \label{eq: finalB}
B > \max \Big(  2 |\min(\mathcal{A})|   \,,\,   2K \max (\mathcal{B}) \Big)
\end{equation}

\subsection{The complete model}

The complete model can easily be expressed as a complete weighted graph whose values are the sum of the values as seen in section \ref{subsec:Variance values} and the constraint that we explained above.

Specifically, the nodes of the graph will have value:
\begin{equation} \label{eq:a_i totale} 
a_i \triangleq \alpha_i -BK +\dfrac{B}{2}
\end{equation}

The edges of the graph will have value:
\begin{equation} \label{eq:b_ij totale}
b_{i,j} \triangleq \beta_{i,j} + B
\end{equation}

To conclude, the final QUBO instance will be:
\begin{align} \label{eq: QUBO complete model}
O(z_1, ..., z_n)= \sum \limits_{i=1}^{|\mathcal{X}|} \Big(J(\{x_i\})-J(\emptyset ) -BK +\dfrac{B}{2}\Big) z_i + \nonumber \\
\sum \limits_{1 \leq i < j \leq |X|} \Big(J(\{x_i, x_j\}) -J(\{x_i\}) -J(\{x_j\}) +J(\emptyset )  + B\Big) z_i z_j
\end{align}
with B computed as in Equation \ref{eq: finalB}.

\subsection{Optimized variant} \label{subsec:Optimized variant}

Here we propose a variant of the model described above. 
Notice that values $\beta_{i,j}$ computed in Equation \ref{eq:beta_ij} represent the difference between the variance reduction obtainable with two sampling points acquired at the same time and the sum of the variance reduction obtainable with the two points acquired one at a time.

However, $\beta_{i,j}$ is not taking into account how the variance in the Gaussian process is affected by the presence of other measurement locations. 
In general, the variance of a Gaussian process in monotonically decreasing with the number of sampling points that we add. 
Hence to better represent the real variance reduction of the Gaussian process in case we have more than 2 sampling locations, the $\beta_{i,j}$ values should be lower then what computed by Equation \ref{eq:beta_ij}.

As a consequence, we propose a variant where we multiply by a weight $w$ values $\beta_{i,j}$ computed by Equation \ref{eq:beta_ij}.
This $w$ multiplier is intended as a scaling factor for the difference in the variance reduction, hence $\beta_{i,j}$ is computed as follows: 

\begin{align} \label{eq:beta_ij con w}
\beta_{i,j} &\triangleq w\big(J(\{\textbf{x}_i, \textbf{x}_j\})-J(\emptyset ) - \alpha_i - \alpha_j \big)\nonumber \\
&= w\big(J(\{\textbf{x}_i, \textbf{x}_j\}) -J(\{\textbf{x}_i\}) -J(\{\textbf{x}_j\}) +J(\emptyset ) \big)
\end{align}

By setting $0<w\leq 1$ we can better approximate the real variance values of the Gaussian process for cases where we have 3 or more sampling locations.
We will show in the empirical evaluation (Section \ref{sec: Empirical Evaluation(QUBO GP)}) that this additional parameter allows us to obtain better results.

\section{Empirical Evaluation} \label{sec: Empirical Evaluation(QUBO GP)}


In what follows we present the results of the empirical evaluation of our QUBO model for Gaussian process posterior variance reduction. 
Notice that we are not proposing an optimization method for quadratic unconstrained binary problems, instead we want to show that the QUBO objective function represented by our model is a good approximation of the problem as explained in Section \ref{subsec:Problem definition}.
The main objectives of this empirical evaluation are:
\begin{enumerate}
\item Show a comparison between the optimal solution of the the QUBO model with respect to submodular optimization.
\item Show a comparison with a random sampling solution used as a simple baseline technique.
\item Test the model under different conditions:
\begin{itemize}
\item Using different hyperparameters of the Gaussian process to show the generality of the approach.
\item Showing how the model behaves when varying the number of sampling points $K$.
\end{itemize}
\end{enumerate}

\subsection{Dataset and setup} \label{subsec:Dataset and setup}

We have generated two 2-dimensional cubic datasets with equally distributed domain points $\mathcal{X}$.
Specifically, the cardinalities of the domains $|\mathcal{X}|$, that is the number of points on which we evaluate the Gaussian process, are 25 and 36 points respectively.
We tested the procedure by training the Gaussian process using the squared exponential kernel reported in Equation \ref{eq: SEK}.

In our tests we used two different length-scale $l$, two different $\sigma_f$ and two different $\sigma_n$. As previously mentioned in Section \ref{subsec:Gaussian Processes}, $l$ describes the smoothness property of the true underlying function, $\sigma_f$ describes the standard deviation of the modeled function and $\sigma_n$ the standard deviation of the noise of the observation. These hyperparameters gives us a total o 8 different combinations to test the model under different conditions.

For the variant presented in Section \ref{subsec:Optimized variant} we have tested our model using 19 different parameter $w$, specifically from 0.1 up to 1 with steps of 0.05. Notice that using $w=1$ corresponds to the the basic version of the model described by Equation \ref{eq: QUBO complete model}.
To compute the optimum value of the QUBO objective function under all these different settings we used CPLEX optimization library.
All the above mentioned combination of hyperparameters and datasets have been tested by adapting a different number $K$ of measurement points which varies from 2 up to 7. The case of a single point has been excluded since the submodular greedy technique is optimal by definition.

Moreover, regarding the comparison with a random procedure, for each of the described combination of hyperparameters and number $K$ of sampling locations, we have generated 100 randomly selected solutions (i.e. randomly selected sampling locations between the domain points $\mathcal{X}$.

\subsection{Results} \label{subsec:Results(QUBO_GP)}

Figures \ref{fig:results(QUBO_GP)25p} and \ref{fig:results(QUBO_GP)36p} show the aggregated results of the experiments previously described. In these plots we can observe the total remaining variance of the Gaussian process (in logarithmic scale for the sake of representation) by varying the number $K$ of sampling points. Each line in these charts represents the average over the eight combination of hyperparameters used in the experiment. Moreover for random technique the line represents the average over $8\times100$ experiments (8 combination of hyperparameters and 100 randomly selected combinations of sampling points).

\begin{figure}[ht]
	\centering
    \includegraphics[keepaspectratio, width=1\linewidth]{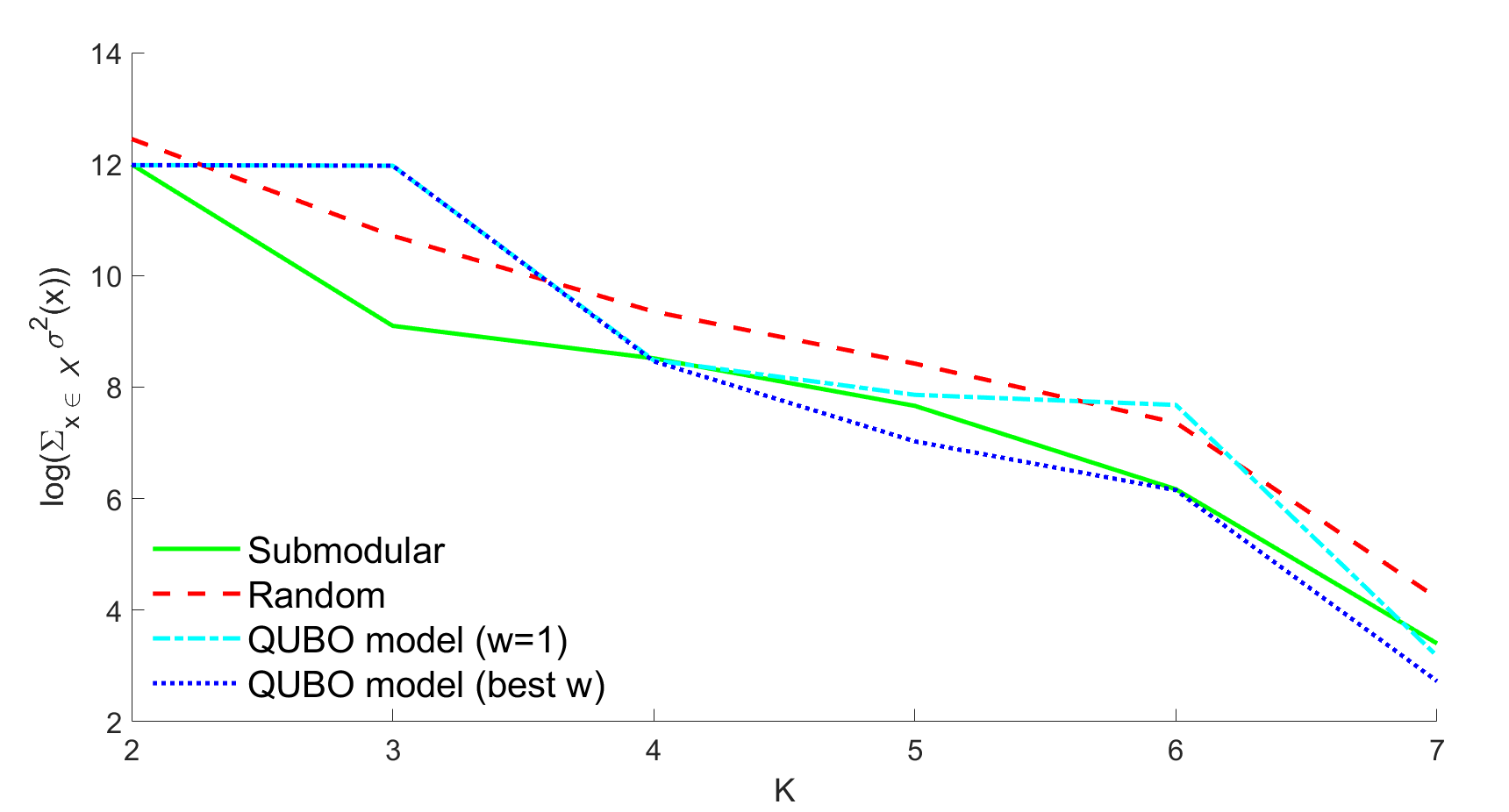}
    \caption{Remaining variance of the Gaussian process by varying the number $K$ of sampling locations on the dataset with a domain composed of 25 points. The results represent the average over the 8 combination of hyperparameters used during the experiment. }
    \label{fig:results(QUBO_GP)25p}
\end{figure}

\begin{figure}[ht]
	\centering
    \includegraphics[keepaspectratio, width=1\linewidth]{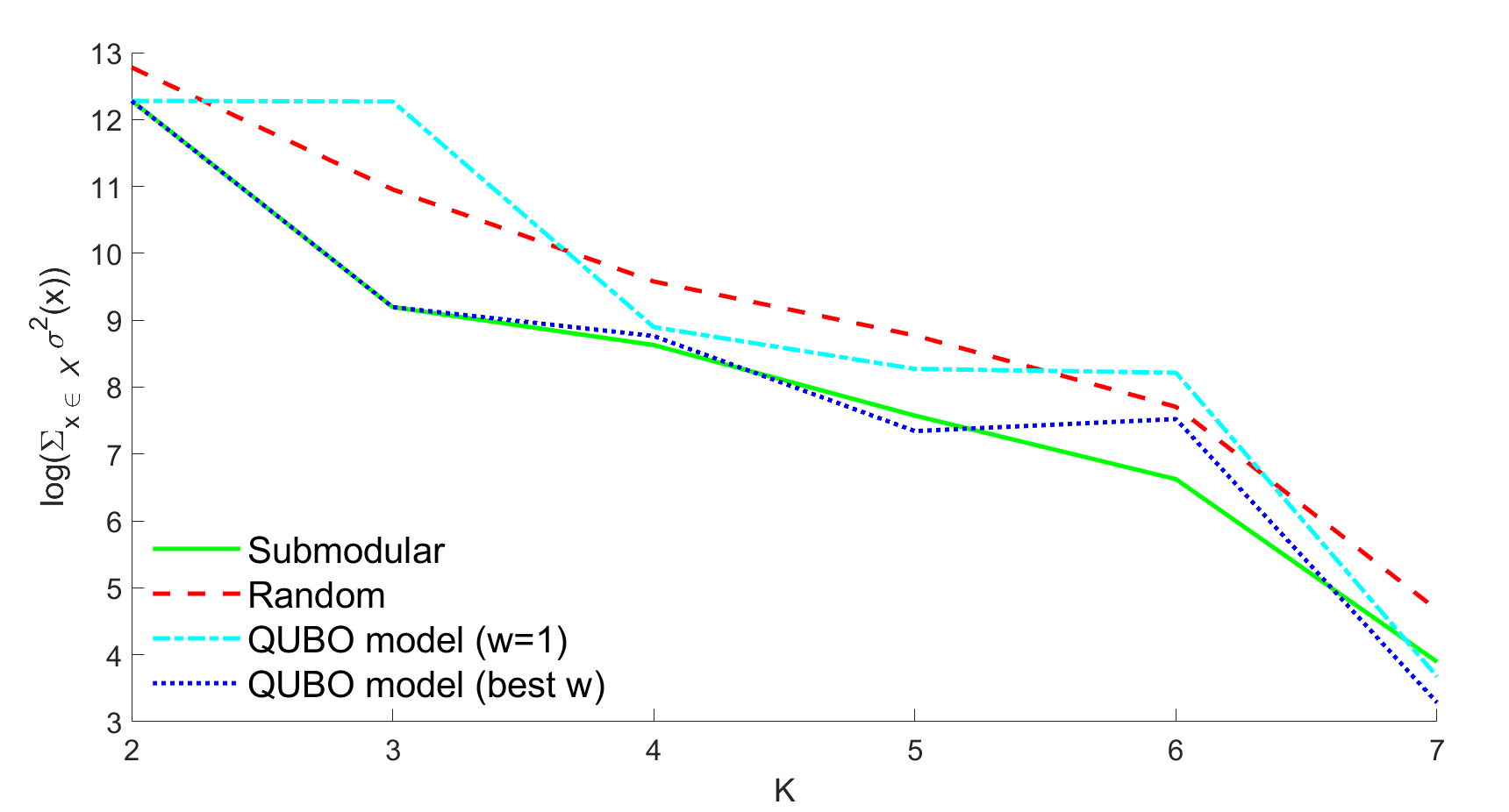}
    \caption{Posterior variance of the Gaussian process by varying the number $K$ of sampling locations on the dataset with a domain composed of 36 points. The results represent the average over the 8 combination of hyperparameters used during the experiment. }
    \label{fig:results(QUBO_GP)36p}
\end{figure}

First of all we notice that in general the optimized variant of the QUBO model (described in Section \ref{subsec:Optimized variant}) where we module the strength of the quadratic terms by a parameter $w$, allows us to obtain much better results compared to the \squotes{standard} QUBO model (that correspond to use $w=1$), proving that indeed a good tuning of that parameter provides an advantage for our QUBO model. 
Regarding the $w$ parameter we remand to the consideration presented in the next section.

Moreover, we can observe that the optimum solutions of our QUBO model (by tuning the $w$ parameter) in the first dataset (Figure \ref{fig:results(QUBO_GP)25p}) are comparable with submodular selection technique when using 2 sampling points, worst with 3 and better in all the other cases. A similar situation is true for the second dataset (Figure \ref{fig:results(QUBO_GP)36p}) where we have in some cases a comparable results, in some worst and in the remaining  better results than submodular.

On average the solutions obtained with a random selection of sampling points perform worst than both submodular and the QUBO model.

As we can observe for both the datasets (Figures \ref{fig:results(QUBO_GP)25p} and \ref{fig:results(QUBO_GP)36p}) the trend for all the techniques tested is the same. As expected, by adding more measurement locations the variance of the Gaussian process decreases. However, the interleaving of the curve representing our QUBO model and the curve representing the submodular selection shows that the optimum of the QUBO model represent indeed a good approximation of the objective function that we are approximating.

\section{Conclusions} \label{sec: Conclusions}

In this paper we proposed a novel QUBO model to tackle the problem of optimizing sampling locations in order to minimize the posterior variance of a Gaussian process. 
The strength of this contribution is the proposal of a completely alternative method that can be used by non-classical computing architectures (quantum annealer) and therefore benefit from research in this field.

Although the $w$ parameter of our model has to be determined empirically, results shows that the optimum of the QUBO objective function represent a good solution for the above mentioned problem, obtaining comparable and in some cases better results than the widely used submodular technique.

We believe that our contribution with this QUBO model takes an important first step towards the sampling optimization of Gaussian processes in the context of quantum computation.

\bibliographystyle{unsrt}
\bibliography{biblio}

\end{document}